\newtheorem{thm}{Theorem}[section]
\newtheorem{lem}[]{Lemma}
 \newcommand{\thmref}[1]{Theorem~\ref{#1}}
 \newcommand{\lemref}[1]{Lemma~\ref{#1}}
\newcommand{\R}{{\mathbb R}}
\newcommand{\C}{{\mathbb C}}
\newcommand{\bee}{\begin{equation*}}
\newcommand{\eee}{\end{equation*}}
\newcommand{\be}{\begin{equation}}
\newcommand{\ee}{\end{equation}}
\newcommand{\pn}{\par\noindent}
\title{Dynamical Systems Method (DSM) for solving nonlinear operator
equations in Banach spaces}
\author{A G Ramm\\
\small Department of Mathematics\\[-0.8ex]
\small Kansas State University, Manhattan, KS 66506-2602, USA\\[-0.8ex]
\small \texttt{ramm@math.ksu.edu}\\
}
\begin{document}

Eurasian Math. Journal, 3, N1, (2012), 5-19.

 \date{} \maketitle \begin{abstract} Let $F(u)=h$ be a solvable operator
equation
in a Banach space $X$ with a Gateaux differentiable norm. Under minimal
smoothness assumptions on $F$, sufficient conditions are given for
the validity of the Dynamical Systems Method (DSM) for solving the above
operator equation.
It is proved that the DSM (Dynamical Systems Method) \bee
\dot{u}(t)=-A^{-1}_{a(t)}(u(t))[F(u(t))+a(t)u(t)-f],\quad u(0)=u_0,\
\eee converges to $y$ as $t\to +\infty$, for
$a(t)$  properly chosen. Here  $F(y)=f$, and $\dot{u}$ denotes the time
derivative.
\end{abstract}
\pn{\\ MSC 2000, 47J05, 47J06, 47J35 \\ {\em Key words:}
Nonlinear operator equations; DSM (Dynamical Systems Method); Banach
spaces }

\section{Introduction} Consider an operator equation \be\label{e1} F(u)=f,
\ee where $F$ is an operator in a Banach space $X$
 with a Gateaux-differentiable norm. Assume that $F$ is continuously
Fr\'{e}chet differentiable, $F'(u):=A(u)$. Denote by $A_a:=A+aI$,
where $I$ is the identity operator, and by $c_j$, $j=0,1,2,3$, various
positive constants. Let $L$ be  a smooth path on the
complex plane $\C$ joining the origin and some point $a_0$,
$0<|a_0|<\epsilon_0$, where
$\epsilon_0>0$ is a small fixed number independent of $u$.

The following assumptions {\bf A1-}
{\bf A3} are valid throughout the paper.

{\bf A1.} {\it Assume that
\be\label{e2} \|A(u)-A(v)\|\leq c_0\|u-v\|^\kappa,
\quad \kappa\in (0,1], \ee
where $\kappa$ is a constant.}

{\bf A2.} {\it Assume that} \be\label{e3} \|A_a^{-1}(u)\|\leq
\frac{c_1}{|a|^b};\,\,\, \forall a\in L,\,\,\, 0<|a|<\epsilon_0.\ee

Assumption \eqref{e3} holds if there is a smooth path $L$ on a complex
$a$-plane, consisting of regular points of the operator $A(u)$, such
that the norm of the resolvent $A_a^{-1}(u)$ grows, as $a\to 0$,
not faster than a power $|a|^{-b}$. Thus, assumption \eqref{e3}
is a weak assumption. For example, assumption \eqref{e3} is satisfied for
the class of linear operators $A$, satisfying {\it the spectral
assumption,}
introduced in \cite{R499}, Chapter 8. This spectral assumption
says, that the set $\{a: |\arg a-\pi|\leq \phi_0, \,\, 0<|a|<\epsilon_0\}$
consists of the regular points of the operator $A$. This assumption
implies the estimate $||A_a^{-1}||\leq \frac{c_1}{a},\,\, 0<a<\epsilon_0,$
that is, estimate \eqref{e3} with $b=1$ and $a\in (0, \epsilon_0)$.

{\bf A3.}   {\it Assume  that the equation \be\label{e4}
F(w_a)+aw_a-f=0,\quad a\in L, \ee is uniquely solvable for any $f\in
X$, and \be\label{e5} \lim_{a\to 0,a\in L}\|w_a-y\|=0,\quad F(y)=f.
\ee }
 {\it Assume that there exists a constant $c>0$ such that}
\be\label{e5'} |\dot{a}(t)|\le c|\dot{r}(t)|, \qquad r(t):=|a(t)|.
\ee

In formula \eqref{e13} (see below) inequality $|\dot{r}(t)|\le
|\dot{a}(t)|$
is established. Thus,
\be\label{en}
|\dot{r}(t)|\le |\dot{a}(t)|\le c|\dot{r}(t)|.
\ee

We formulate the main result at the end of the paper for convenience of
the reader, because some additional assumptions, used in the proof of
Theorem 2.1 are flexible and will arise naturally in the course of the
proof.

{\it One of the goals in this paper is to demonstrate the methodology
for establishing the convergence results of the type obtained in Theorem
2.1. }

All our assumptions are satisfied, for example, if $F$ is a monotone
operator in a Hilbert space $H$ and $L$ is a segment $[0,\epsilon_0]$.
In this case $c_1=1$ and $b=1$. Our assumptions are satisfied for
the class of operators satisfying a spectral assumption, mentioned above,
which was studied in
\cite{R499} in connection to the Dynamical System Method (DSM)
for solving operator equations.
Sufficient conditions
for \eqref{e5} to hold are given in \cite{R499}.

Every equation \eqref{e1} with a linear, closed, densely defined in
a Hilbert space $H$
operator $F=A$ can be reduced to an equation with a monotone operator
$A^*A$, where $A^*$ is the adjoint to $A$. The operator $T:=A^*A$ is
selfadjoint and densely defined in $H$.  If $f\in D(A^*)$, where $D(A^*)$
is the domain of $A^*$, then the equation $Au=f$ is equivalent to
$Tu=A^*f$, provided that $Au=f$ has a solution, i.e., $f\in R(A)$, where
$R(A)$ is the range of $A$.
Recall that $D(A^*)$ is dense in $H$ if $A$ is closed and densely
defined in $H$.
If $f\in R(A)$ but $f\not\in D(A^*)$, then equation $Tu=A^*f$ still makes
sense and its normal solution $y$, i.e., the solution with minimal
norm, can be defined as
\be\label{e6} y=\lim_{a\to 0}T_a^{-1}A^*f.
\ee
One proves that $Ay=f$, and $ y\perp N(A),$
where $N(A)$ is the null-space of $A$. These
results are proved in \cite{R499}.

Our aim is to prove convergence of the DSM
for solving equation \eqref{e1}:
\be\label{e7}
\dot{u}(t)=-A_{a(t)}^{-1}(u(t))[F(u(t))+a(t)u(t)-f],\quad u(0)=u_0,
\ee where
$u_0\in X$ is an initial element, $a(t)\in
C^1[0,\infty)$, $a(t)\in L$. The DSM version \eqref{e7} is a
computationally efficient analog of a continuous regularized Newton's
method for solving equation \eqref{e1}. Other versions of DSM are
studied in \cite{R499}. In \cite{R606} an approach to a justification
of the DSM in Banach spaces is developed. The ideas from \cite{R606} are
used in this paper. Among other things, an important Lemma 1 is
formulated in a more general form than in \cite{R499}, see also
 \cite{R596},\cite{R601}, \cite{R604},  \cite{R605},  \cite{R611},
\cite{R612}.
Our main result is formulated in \thmref{thm1}, in Section 2.

The DSM for solving operator equations has been developed in the
monograph \cite{R499}. It was used as an efficient computational
tool in \cite{R574}, \cite{R612}. One of the earliest papers on the
continuous analog of Newton's method for solving well-posed
nonlinear operator equations was \cite{G}.

The novel points in our paper include the larger class of
the operator
equations than earlier considered, and the weakened assumptions on the
smoothness of the nonlinear operator $F$: in \cite{R499}
it was often assumed that $F^{\prime \prime}(u)$ is locally bounded, in
the current paper
a much weaker assumption \eqref{e2} is used.

Our proof of \thmref{thm1}
uses the following result.
\begin{lem}\label{lem1} Assume
that $g(t)\geq 0$ is continuously differentiable on any interval $[0,T)$,
on which it is defined, and satisfies the following inequality:
\be\label{e8} \dot{g}(t)\leq
-\gamma(t)g(t)+\alpha(t,g)+\beta(t),\quad
t\in[0,T), \ee where  $\alpha(t,g)$, $\gamma(t)$ and
$\beta(t)$ are rel-valued continuous on $[0,\infty)$ functions of $t$,
$\alpha(t,g)$ is locally Lipschitz with respect to $g$.
Suppose that there
exists a $\mu(t)>0$, $\mu(t)\in C^1[0,\infty)$, such that \be\label{e9}
\alpha(t,\mu^{-1}(t))+\beta(t)\leq
\mu^{-1}(t)[\gamma(t)-\dot{\mu}(t)\mu^{-1}(t)],\qquad t\geq 0, \ee and
\be\label{e10} \mu(0)g(0)\le 1. \ee Then $T=\infty$, i.e., $g$ exists on
$[0,\infty)$, and
\be\label{e11} 0\leq g(t)\le \mu^{-1}(t),\quad t\geq 0.
\ee

\end{lem}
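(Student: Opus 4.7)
The plan is to recognize $\phi(t) := \mu^{-1}(t)$ as a supersolution of the differential inequality \eqref{e8} and then invoke a comparison principle with the subsolution $g$.

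First I would compute $\dot{\phi}(t) = -\dot{\mu}(t)\mu^{-2}(t)$ and substitute into the hypothesis \eqref{e9}. Its right-hand side rearranges to $\gamma(t)\phi(t) + \dot{\phi}(t)$, so \eqref{e9} is equivalent to
\[
\dot{\phi}(t) \geq -\gamma(t)\phi(t) + \alpha(t,\phi(t)) + \beta(t),
\]
which is the reverse of \eqref{e8} with $g$ replaced by $\phi$. The initial condition \eqref{e10} becomes $g(0)\leq \phi(0)$.

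Next I would establish the pointwise comparison $g(t)\leq \phi(t)$ on every subinterval $[0,T_0]\subset[0,T)$ by a small-perturbation argument. On such a compact interval, $\gamma$ is bounded by some $G$, $\phi$ takes values in a compact set, and $\alpha(t,\cdot)$ is Lipschitz near the range of $\phi$ with some constant $L$. For $\varepsilon>0$ and $K>G+L$ I would introduce the comparison function $\phi_\varepsilon(t):=\phi(t)+\varepsilon e^{Kt}$ and show that $g<\phi_\varepsilon$ throughout $[0,T_0]$. Assume for contradiction there is a first contact time $t_0$ at which $g(t_0)=\phi_\varepsilon(t_0)$; then $\dot g(t_0)\geq \dot\phi_\varepsilon(t_0)$. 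Subtracting the inequalities for $g$ and $\phi$, and using the Lipschitz bound on $\alpha(t,g)-\alpha(t,\phi)$ at $t_0$ (where $g-\phi=\varepsilon e^{Kt_0}$), yields
\[
\dot g(t_0) - \dot{\phi}_\varepsilon(t_0) \leq (G+L-K)\,\varepsilon e^{Kt_0} < 0,
\]
a contradiction. Letting $\varepsilon\to 0^+$ gives $g(t)\leq \mu^{-1}(t)$ on $[0,T_0]$, and since $T_0<T$ was arbitrary, on all of $[0,T)$.

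Finally I would conclude $T=\infty$: since $\mu^{-1}\in C^1[0,\infty)$, the bound $g(t)\leq \mu^{-1}(t)$ rules out any finite-time blow-up, and the maximal interval of definition of the $C^1$ function $g$ satisfying \eqref{e8} cannot then be bounded. I expect the comparison step to be the main obstacle: the naive argument at a first crossing of the unperturbed $\phi$ only yields $\dot g-\dot\phi\leq 0$ with equality admissible, which does not by itself exclude $g$ touching and then exceeding $\phi$. Perturbing $\phi$ upward by $\varepsilon e^{Kt}$ promotes this marginal inequality to a strict one, and the local Lipschitz assumption on $\alpha$ is precisely what is needed to control the nonlinear difference at the contact point.
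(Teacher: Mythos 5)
Your proof is correct, and it reaches the key bound $g\le\mu^{-1}$ by a route that differs from the paper's in one substantive way. Both arguments begin identically: rewriting \eqref{e9} as the statement that $\mu^{-1}$ is a supersolution of \eqref{e8} (this is exactly the paper's \eqref{e66}). The paper then introduces the exact solution $\phi$ of the associated Cauchy problem \eqref{e67} with $\phi(0)=g(0)$ and invokes a cited comparison theorem (Hartman, Theorem III.4.1), squeezing $g\le\phi\le\mu^{-1}$; the local Lipschitz hypothesis on $\alpha$ is used there to guarantee existence and uniqueness of $\phi$. You instead compare the subsolution $g$ with the supersolution $\mu^{-1}$ directly, proving the comparison principle from scratch via the perturbation $\phi_\varepsilon=\phi+\varepsilon e^{Kt}$ and a first-contact argument, with the Lipschitz hypothesis entering as the constant $L$ in the choice $K>G+L$. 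Your version is self-contained and dispenses with the intermediate solution and the external reference; the paper's is shorter modulo that citation. One bookkeeping point in your argument: fix a neighborhood of the range of $\mu^{-1}$ on $[0,T_0]$ first, extract the Lipschitz constant $L$ on that neighborhood, then choose $K>G+L$, and only then restrict $\varepsilon$ so that $\varepsilon e^{KT_0}$ keeps $g(t_0)$ inside the neighborhood --- otherwise the choice of $L$ would circularly depend on $K$. This is routine but worth stating explicitly. Your final step, passing from the bound $g\le\mu^{-1}$ on $[0,T)$ to $T=\infty$, is at the same level of detail as the paper's, which likewise simply asserts that boundedness of $g$ on compact subintervals precludes finite $T$.
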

This lemma generalizes  some results from \cite{R499}, \cite{R570}.
It is useful in a study of large-time behavior of solutions to
evolution problems, which are important in many appications, see,
for example, \cite{DK}, \cite{T}, \cite{R605}.  Lemma 1 is proved at
the end of the paper for convenience of the reader and for making
this paper essentially self-contained. We apply Lemma 1 with
$\alpha(t,g)=\alpha(t)g^p$, $p>1$ is a constant, and $\alpha(t)>0$
is a continuous function.

In Section 2 a method is given for a proof of the following conclusions:

{\it There exists a unique solution $u(t)$ to problem \eqref{e7} for all
$t\geq
0$, there exists $u(\infty):=\lim_{t\to \infty}u(t)$, and
$F(u(\infty))=f$, that is:}
\be\label{e12} \exists ! u(t)\quad \forall t\geq 0; \ \exists
u(\infty);\quad F(u(\infty))=f. \ee

The assumptions on $u_0$ and $a(t)$ under which conclusions \eqref{e12}
hold for the
solution to problem \eqref{e7} are formulated in \thmref{thm1} in Section
2.
Theorem 2.1 in  Section 2 is our main result. Roughly speaking,
this result says that conclusions \eqref{e12} hold for the solution
to problem \eqref{e7}, provided that $a(t)$ is suitably chosen.

\section{Proofs} Let $|a(t)|:=r(t)>0$. If $a(t)=a_1(t)+ia_2(t)$, where
$a_1(t)=\text{Re}\, a(t),\, a_2(t)=\text{Im}\, a(t)$, then
\be\label{e13}
|\dot{r}(t)|\leq |\dot{a}(t)|. \ee
Indeed, \be\label{e14}
|\dot{r}(t)|=\frac{|a_1\dot{a}_1+a_2\dot{a}_2|}{r(t)}\leq
\frac{r(t)|\dot{a}(t)|}{r(t)}, \ee and \eqref{e14} implies \eqref{e13}.

Let \be\label{e15} g(t):=\|z(t)\|,\qquad z(t):=u(t)-w_a(t), \ee
where $u(t)$ solves \eqref{e7} and $w_a(t)$ solves \eqref{e4} with
$a=a(t)$. By the assumption, $w_a(t)$ exists for every $t\geq 0$.
The local existence of $u(t)$, the solution to \eqref{e7}, is the
conclusion of \lemref{lem2}. Let $\psi(t)\in C^1([0,\infty);X)$. In
the following lemma a proof of local existence of the solution to
problem \eqref{e7} is given by a novel argument. The right-hand side
of \eqref{e7} is a nonlinear function of $u$, which does not, in
general, satisfy the Lipschitz condition. This condition  is the
standard condition in the usual proofs of the local existence of the
solution to an evolution problem. Our argument uses an abstract
inverse function theorem. {\it This argument is valid under the
minimal assumption that $F'(u)$ depends continuously on $u$}.
\begin{lem}\label{lem2} If assumption
\eqref{e3} holds and \eqref{e4} is uniquely solvable for any $f\in X$,
then the solution $u(t)$ to \eqref{e7}
exists locally. \end{lem}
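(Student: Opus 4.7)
The plan is to convert the non-Lipschitz evolution \eqref{e7} into an equivalent ODE whose right-hand side is locally Lipschitz, so that the standard Picard iteration applies in the Banach space $X$. The key device is the change of variable
\[
v(t) := F(u(t)) + a(t)u(t) - f,
\]
together with the observation that, by assumption {\bf A3}, the map $G_{a}: X \to X$ defined by $G_{a}(u) := F(u) + au$ is a bijection for each $a \in L$ with $|a|$ small.

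First I would use an abstract inverse function theorem to upgrade this bijection to a $C^{1}$ diffeomorphism, so that $G_{a}^{-1}$ is Fr\'echet differentiable with $(G_{a}^{-1})'(g) = A_{a}^{-1}(G_{a}^{-1}(g))$. Three ingredients combine: continuous Fr\'echet differentiability of $G_{a}$ (inherited from that of $F$, so that only continuity of $A(u)$ in $u$ is needed, consistent with the italicized remark in the text); global bijectivity from {\bf A3}; and the uniform resolvent bound $\|A_{a}^{-1}(u)\|\leq c_{1}/|a|^{b}$ from {\bf A2}, which guarantees that $G_{a}'(u)$ has a bounded inverse at every point. Together these yield a global $C^{1}$ inverse, and in particular the representation $u(t) = G_{a(t)}^{-1}(v(t)+f)$.

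Next, differentiating the identity $G_{a(t)}(u(t)) = v(t)+f$ in $t$ gives $\dot v = A_{a(t)}(u)\dot u + \dot a(t) u$, and substituting \eqref{e7} to eliminate $\dot u$ yields the transformed evolution
\[
\dot v(t) = -v(t) + \dot{a}(t)\, G_{a(t)}^{-1}\!\bigl(v(t)+f\bigr), \qquad v(0) = F(u_{0}) + a(0)u_{0} - f.
\]
On any compact $t$-interval on which $|a(t)|\geq \varepsilon > 0$, the right-hand side is continuous in $t$ and locally Lipschitz in $v$, with Lipschitz constant controlled by $1 + |\dot a(t)|\cdot c_{1}/|a(t)|^{b}$, since $\|(G_{a}^{-1})'\|\leq c_{1}/|a|^{b}$. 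Standard Picard--Lindel\"of in the Banach space $X$ then furnishes a unique $v\in C^{1}([0,\tau);X)$ on some interval of positive length, and $u(t):=G_{a(t)}^{-1}(v(t)+f)$ is a $C^{1}$ solution of \eqref{e7} with $u(0)=u_{0}$, recovered by reversing the derivation.

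The main obstacle is the passage from pointwise invertibility of $G_{a}$ (given directly by {\bf A3}) to continuous differentiability of $G_{a}^{-1}$ on all of $X$; this is precisely what the uniform bound \eqref{e3} buys, via the abstract inverse function theorem. Without that estimate, $G_{a}^{-1}$ would only be continuous, the transformed equation for $v$ would fail to be Lipschitz, and the reduction would collapse. A secondary technical point is the joint continuity of $(t,v)\mapsto G_{a(t)}^{-1}(v+f)$ needed to guarantee continuity in $t$ of the right-hand side of the $v$-ODE, but this follows routinely from the $C^{1}$ dependence of $a(t)$ and the continuous dependence of $A(u)$ on $u$.
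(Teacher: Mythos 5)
Your proposal is correct and follows essentially the same route as the paper: the substitution $v(t)=F(u(t))+a(t)u(t)-f$ is exactly the paper's $\psi(t)$, the inverse function theorem combined with \eqref{e3} and \eqref{e4} yields the $C^1$ inverse $G$, and the transformed equation $\dot v=-v+\dot a(t)G_{a(t)}^{-1}(v+f)$ with locally Lipschitz right-hand side is precisely the paper's \eqref{e21}--\eqref{e22}, solved by standard Picard iteration. Your version merely makes explicit a few details (the formula for $(G_a^{-1})'$ and the Lipschitz constant) that the paper leaves implicit.
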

\begin{proof} Differentiate equation \eqref{e4} with $a=a(t)$
with respect to $t$. The result is \be\label{e16}
A_{a(t)}(w_a(t))\dot{w}_a(t)=-\dot{a}(t)w_a(t), \ee or \be\label{e17}
\dot{w}_a(t)=-\dot{a}(t)A^{-1}_{a(t)}(w_a(t))w_a(t). \ee Denote
\be\label{e18} \psi(t):=F(u(t))+a(t)u(t)-f. \ee
For any $\psi\in H$
equation \eqref{e18} is uniquely solvable for $u(t)$ by our assumption
\eqref{e4}, which is used with $f+\psi(t)$ in place of $f$ in \eqref{e4}.
By the inverse function theorem, which holds due to our assumption
\eqref{e3}, and by assumption \eqref{e2},  the solution
$u(t)$ to
\eqref{e18} is continuously differentiable with respect to $t$ provided
$\psi(t)$ is.
One may solve \eqref{e18} for $u$ and write $u=G(\psi)$, where
the map $G$ is continuously Fr\'{e}chet differentiable because $F$ is.

Differentiate \eqref{e18} and get \be\label{e19}
\dot{\psi}(t)=A_{a(t)}(u(t))\dot{u}(t)+\dot{a}(t)u. \ee If one wants the
solution to \eqref{e18} to be a solution to \eqref{e7}, then one has to
require that \be\label{e20} A_{a(t)}(u(t))\dot{u}=-\psi(t). \ee If
\eqref{e20} holds, then \eqref{e19} can be written as \be\label{e21}
\dot{\psi}(t)=-\psi+\dot{a}(t)G(\psi),\quad G(\psi):=u(t), \ee where
$G(\psi)$ is continuously Fr\'{e}chet differentiable. Thus, equation
\eqref{e21} is equivalent to \eqref{e7} at all $t\geq 0$ if \be\label{e22}
\psi(0)=F(u_0)+a(0)u_0-f. \ee Indeed, if $u$ solves \eqref{e7} then
$\psi$, defined in \eqref{e18}, solves the Cauchy problem
\eqref{e21}-\eqref{e22}.
Conversely, if $\psi$ solves \eqref{e21}-\eqref{e22}, then $u(t)$, defined
as the unique solution to \eqref{e18}, solves \eqref{e7}. Since the
right-hand side of \eqref{e21} is Fr\'{e}chet differentiable, it satisfies
a local Lipschitz condition. Thus, problem \eqref{e21}-\eqref{e22} is
locally, solvable. Therefore, problem \eqref{e7} is locally solvable.\\
\lemref{lem2} is proved. \end{proof}
It is known (see, for example, \cite{R499}) that the solution
$u(t)$ to
\eqref{e7} exists globally if the following
estimate holds: \be\label{e23} \sup_{t\geq 0}\|u(t)\|<\infty. \ee
\begin{lem}\label{lem3}
Estimate \eqref{e23} holds. \end{lem}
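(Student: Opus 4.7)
The plan is to reduce the boundedness of $\|u(t)\|$ to a bound on $g(t):=\|u(t)-w_{a(t)}\|$ via a differential inequality amenable to \lemref{lem1}. Once such a bound is established on the maximal interval of existence $[0,T)$, the triangle inequality $\|u(t)\|\le g(t)+\|w_{a(t)}\|$, combined with the boundedness of $\|w_{a(t)}\|$ (a consequence of \eqref{e5}), produces a uniform bound on $\|u(t)\|$ on $[0,T)$; by the continuation criterion quoted just before \lemref{lem3}, this forces $T=\infty$ and gives \eqref{e23}.

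To obtain the differential inequality, one rewrites $F(u)+au-f$ in terms of $z:=u-w_a$. Using $F(w_a)+aw_a-f=0$ and the fundamental theorem of calculus,
\[
F(u)+au-f=A_{a}(u)z-R_1,\qquad R_1:=\int_0^1\!\bigl[A(u)-A(w_a+sz)\bigr]z\,ds,
\]
and \textbf{A1} gives $\|R_1\|\le \frac{c_0}{1+\kappa}g^{1+\kappa}$. Substituting in \eqref{e7} yields $\dot u=-z+A_{a(t)}^{-1}(u)R_1$; subtracting \eqref{e17} gives
\[
\dot z=-z+A_{a(t)}^{-1}(u)R_1+\dot a(t)\,A_{a(t)}^{-1}(w_a)\,w_a.
\]
Because the norm on $X$ is Gateaux differentiable, the standard duality-map computation yields the one-sided estimate $\dot g(t)\le -g(t)+\|A_{a(t)}^{-1}(u)\|\,\|R_1\|+|\dot a(t)|\,\|A_{a(t)}^{-1}(w_a)\|\,\|w_a\|$. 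Inserting \textbf{A2}, \eqref{en}, and the boundedness of $\|w_a\|$ reduces this to
\[
\dot g(t)\le -g(t)+\alpha(t)g(t)^{1+\kappa}+\beta(t),
\]
with $\alpha(t)\le c_2/r(t)^b$ and $\beta(t)\le c_3|\dot r(t)|/r(t)^b$.

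Finally, I would apply \lemref{lem1} with $\gamma\equiv 1$, $\alpha(t,g)=\alpha(t)g^{1+\kappa}$, and a majorant of the form $\mu(t)=\lambda\,r(t)^{-q}$ for suitable positive $\lambda,q$. After dividing by $\mu^{-1}$, condition \eqref{e9} reads (up to multiplicative constants) $c_2\lambda^{-\kappa}r(t)^{q\kappa-b}+c_3\lambda|\dot r(t)|/r(t)^{b+q}\le 1-q|\dot r(t)|/r(t)$, which can be arranged by taking $q\kappa\ge b$, $\lambda$ sufficiently large, and imposing a monotonicity/decay condition on $r(t)$ that keeps $|\dot r|/r$ small; condition \eqref{e10} becomes a smallness hypothesis on $\|u_0-w_{a(0)}\|$ relative to $r(0)$. \lemref{lem1} then delivers $g(t)\le \lambda^{-1}r(t)^q$ on $[0,T)$, and on $[0,\infty)$ after the continuation argument. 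The main obstacle is precisely this simultaneous choice of $\mu$ and of the decay rate of $a(t)$ so that the nonlinear term and the inhomogeneous term in \eqref{e9} are controlled together -- these are the ``additional assumptions'' that the author announces will surface in the statement of \thmref{thm1}. A minor technical point is that $g(t)=\|z(t)\|$ is differentiable only where $z(t)\ne 0$; at zeros of $z$ one falls back on the one-sided inequality $\frac{d^{+}}{dt}\|z\|\le \|\dot z\|$, which is enough for the Gronwall-type comparison.
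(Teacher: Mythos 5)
Your proposal follows essentially the same route as the paper: the decomposition $z=u-w_{a(t)}$, the splitting $F(u)+au-f=A_{a}(u)z-R_1$ with $\|R_1\|=O(g^{1+\kappa})$ from \textbf{A1}, the resulting inequality $\dot g\le -g+\alpha(t)g^{1+\kappa}+\beta(t)$, and the application of \lemref{lem1} with a power-law majorant $\mu(t)=\lambda r(t)^{-q}$ reproduce the paper's equations \eqref{e28'}--\eqref{e45}, with only cosmetic differences (you use the duality-map/one-sided-derivative estimate where the paper substitutes $Z=e^{t}z$ and invokes \lemref{lem4}, and your exponent condition $q\kappa\ge b$ subsumes the paper's specific choice $k=(b+1)/\kappa$). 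The argument is correct.
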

\begin{proof} Denote
\be\label{e24} z(t):=u(t)-w(t), \ee
where $u(t)$ solves
\eqref{e7} and $w(t)=w_{a(t)}$ solves \eqref{e4} with $a=a(t)$.
When $t\to \infty$, the function $w(t)$ tends to the limit $y$ by
\eqref{e5}, and, therefore, is uniformly bounded.
If one proves that
\be\label{e25} \lim_{t\to \infty}\|z(t)\|=0, \ee then \eqref{e23} follows
from \eqref{e25} and the boundedness of $w(t)$. Indeed,
\be\label{e26} \sup_{t\geq 0}\|u(t)\|\leq
\sup_{t\geq 0}\|z(t)\|+\sup_{t\geq 0}\|w(t)\|<\infty. \ee \end{proof}
{\it To prove \eqref{e25} we use \lemref{lem1}.}

Rewrite \eqref{e7} as
\be\label{e28}
\dot{z}=-\dot{w}-A_{a(t)}^{-1}(u(t))[F(u(t))-F(w(t))+a(t)z(t)]. \ee

\begin{lem}\label{lem4}
If the norm $\|w(t)\|$ in $X$ is differentiable, then
\be\label{e30}
\left|\frac {d\|w(t)\|}{dt}\right|\leq \|\dot{w}(t)\|.
\ee
\end{lem}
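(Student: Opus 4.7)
The plan is to deduce the bound directly from the reverse triangle inequality in $X$, without using any special structure beyond the assumed differentiability.

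First I would write, for any $h\neq 0$ small enough,
\[
\bigl|\|w(t+h)\|-\|w(t)\|\bigr|\le \|w(t+h)-w(t)\|,
\]
which is the standard reverse triangle inequality and requires no differentiability of the norm. Dividing both sides by $|h|$ yields
\[
\left|\frac{\|w(t+h)\|-\|w(t)\|}{h}\right|\le \left\|\frac{w(t+h)-w(t)}{h}\right\|.
\]

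Next I would pass to the limit $h\to 0$. By the hypothesis that $\|w(t)\|$ is differentiable as a real-valued function of $t$, the left side converges to $\bigl|\tfrac{d}{dt}\|w(t)\|\bigr|$. Since $w\in C^1([0,\infty);X)$ by the construction preceding the lemma (see \eqref{e17} and the conclusion of \lemref{lem2}), the difference quotient $\tfrac{w(t+h)-w(t)}{h}$ converges in the norm of $X$ to $\dot w(t)$, so by continuity of the norm the right side converges to $\|\dot w(t)\|$. This gives exactly \eqref{e30}.

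There is no serious obstacle here; the only thing one has to be a little careful about is that the differentiability of $t\mapsto\|w(t)\|$ is only assumed, not deduced from the Gateaux differentiability of the norm on $X$, and the proof uses only this scalar differentiability plus strong differentiability of $w(t)$ itself, both of which are already in force. Hence the estimate \eqref{e30} follows with essentially no additional hypothesis beyond what has already been assumed in the setup of the paper.
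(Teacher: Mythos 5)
Your proof is correct and is essentially the same argument as the paper's: both rest on the (reverse) triangle inequality applied to the difference quotient, followed by passage to the limit using the assumed differentiability of $t\mapsto\|w(t)\|$ and the strong differentiability of $w$. The only cosmetic difference is that you take absolute values at the outset, whereas the paper derives the two one-sided inequalities $\pm\frac{d\|w(t)\|}{dt}\leq\|\dot{w}(t)\|$ separately and combines them.
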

\begin{proof}
The triangle inequality implies:
\be\label{e31}
\frac{\|w(t+s)\|-\|w(t)\|}{s}\leq \frac{\|w(t+s)-w(t)\|}{s},\quad s>0.
\ee
Passing to the limit $s\searrow 0$ and using the assumption
concerning the differentiability of the norm in $X$, one gets
$ \frac {d\|w(t)\|}{dt}\leq \|\dot{w}(t)\|$. Similarly,
one gets $-\frac {d\|w(t)\|}{dt}\leq \|\dot{w}(t)\|$. These two
inequalities yield \eqref{e30}.\\
\lemref{lem4} is proved.
\end{proof}

Various necessary and sufficient conditions for the Gateaux or Fr\'echet
differentiability of the norm in Banach spaces
are known in the literature (see, for example, \cite{D} and \cite{DS}),
starting
with Shmulian's paper of 1940, see \cite{Sh}.

Hilbert spaces, $L^p(D)$ and $\ell^p$-spaces, $p\in (1, \infty)$,   and
Sobolev spaces $W^{\ell,p}(D)$, $p\in (1, \infty)$, $D\subset \R^n$
is a bounded domain, have Fr\'echet
differentiable norms. These spaces are uniformly convex and they have the
following property:  if $u_n\rightharpoonup u$ and $||u_n||\to ||u||$
as $n\to \infty$,  then $\lim_{n\to \infty}||u_n-u||=0$.

From \eqref{e17} and \eqref{en} one gets
\be\label{e32} \|\dot{w}\|\leq
c_1|\dot{a}(t)|r^{-b}(t)\|w(t)\|,\quad r(t)=|a(t)|, \ee
where $w(t):=w_a(t)$.
Since we assume that $\lim_{t\to
\infty}|a(t)|=0$, one concludes that \eqref{e5} and \eqref{e32}
imply the following inequality:
\be\label{e33}
\|\dot{w}\|\leq c_2|\dot{a}(t)|r^{-b}(t),\quad c_2=const>0, \ee because
\eqref{e5} implies the following estimate:
\be\label{e34} c_1\|w(t)\|\leq
c_2,\quad t\geq 0. \ee
Inequalities \eqref{en} and  \eqref{e33} imply that
\be\label{e35}
\|\dot{w}\|\leq c_2|\dot{r}(t)|r^{-b}(t),\quad t\geq 0. \ee
Recall that $F'(u):=A(u)$ and note that
\be\label{e36}
F(u)-F(w)=\int_0^1 F'(w+sz)ds z=A(u)z+\int_0^1[A(w+sz)-A(u)]ds z. \ee
Thus, one can write \eqref{e28} as
\be\label{e28'}
\dot{z}(t)=-z(t)-\dot{w}(t)-A^{-1}_{a(t)}(u(t))\eta (t):=-z(t)+W,
\ee
\be\label{e28a}\|\eta(t)\|=O(g^p(t)),\quad p=1+\kappa,
\quad g(t):=\|z(t)\|,
\ee
where estimate \eqref{e2}
was used, and $W$ is defined by the formula
\be\label{e28b}
W:=-\dot{w}(t)-A^{-1}_{a(t)}(u(t))\eta (t).
\ee

Let
\be\label{e28c}
Z(t):=e^{t}z(t).
\ee
Then \eqref{e28'} yields
\be\label{e37a}
e^{-t}\dot{Z}=W.
\ee
 Taking the norm of this equation yields
\be\label{e37b}
e^{-t}\|\dot{Z}\|=\|W\|.
\ee
One has
\be\label{e37c}\|W\|\le c_2|\dot{r}(t)|r^{-b}(t)+c_3r^{-b}(t)g^p(t),\qquad
g(t):=\|z(t)\|, \quad p=1+\kappa,
\ee
where $c_3:=c_0c_1$, $c_0$ is the constant from \eqref{e2} and $c_1$ is
the constant from \eqref{e3}.
Using estimate \eqref{e30}, one gets
\be\label{e37d}\|\dot{Z}\|\ge \left| \frac {d\|Z(t)\|}{dt}
\right|=\left|\frac
{d(e^tg(t))}{dt}\right|.\ee
Using formulas \eqref{e37a}-\eqref{e37d} one gets from \eqref{e28'}
the following inequality:
\be\label{e37} \dot{g}(t)\leq -g+
c_2|\dot{r}(t)|r^{-b}(t)+c_3r^{-b}(t)g^p, \qquad g(t)=\|z(t)\|, \quad
p=1+\kappa.
\ee

Inequality \eqref{e37} is of the form \eqref{e8} with \be\label{e39}
\gamma(t)=1,\quad \alpha(t)=c_3r^{-b}(t),\quad
\beta(t)=c_2|\dot{r}(t)|r^{-b}(t). \ee
Choose
\be\label{e40}\mu(t)=\lambda
r^{-k}(t),\quad \lambda=const>0,\quad k=const>0 .\ee
Then
\be\label{e41}
\dot{\mu}\mu^{-1}=-k\dot{r}r^{-1}. \ee
Let us assume that, as $t\to \infty$,
\be\label{e42}
r(t)\searrow 0,\quad \dot{r}<0,\quad |\dot{r}|\searrow 0. \ee
Assumption \eqref{e10} implies
\be\label{e43} g(0)\frac{\lambda}{r^k(0)}<1, \ee
and inequality \eqref{e9} holds if
\be\label{e44}
\frac{c_3r^{-b}(t)r^{kp}}{\lambda^p}+c_2|\dot{r}(t)|r^{-b}(t)\leq
\frac{r^k(t)}{\lambda}\big(1-k|\dot{r}(t)|r^{-1}(t)\big),\qquad t\geq 0.
\ee
Inequality
\eqref{e44} can be written as
\be\label{e45}
\frac{c_3r^{k(p-1)-b}(t)}{\lambda^{p-1}}+\frac{c_2\lambda|\dot{r}(t)|}
{r^{k+b}(t)}+\frac{k|\dot{r}(t)|}{r(t)}\leq
1, \qquad t\geq 0.
\ee

Let us choose $k$ so that
\bee
k(p-1)-b=1, \eee
that is,
\be\label{e46}
k=\frac{b+1}{p-1}.
\ee
Choose $\lambda$, for example, as follows:
\be\label{e47}
\lambda:=\frac{r^k(0)}{2g(0)}.
\ee
Then inequality \eqref{e43} holds, and
inequality \eqref{e45} can be
written as:
\be\label{e48}
c_3\frac{r(t)[2g(0)]^{p-1}}{[r^k(0)]^{p-1}}+c_2\frac{r^k(0)}{2g(0)}
\frac{|\dot{r}(t)|}{r^{k+b}(t)}+k\frac{|\dot{r}(t)|}{r(t)}\leq 1,\qquad
t\geq 0.
\ee
Note that \eqref{e46} implies:
\be\label{e49}
k+b=kp-1.
\ee
Choose $r(t)$ so that relations \eqref{e42} hold and
\be\label{e50}
k\frac{|\dot{r}(t)|}{r(t)}\leq \frac{1}{2},\qquad t\geq 0.
\ee
Since $r(0)\geq r(t)$ and \eqref{e50} holds, then inequality \eqref{e48}
holds if
\be\label{e51}
c_3\frac{[2g(0)]^{p-1}}{r^{b}(0)}+c_2\frac{r^k(0)}{2g(0)}
\frac{|\dot{r}(t)|}{r^{kp-1}}\leq \frac{1}{2},\qquad t\geq 0.
\ee
Denote
\be\label{e52}
c_2\frac{r^k(0)}{2g(0)}=c_2\lambda:=c_4.
\ee
Let
\be\label{e53}
c_4\frac{|\dot{r}(t)|}{r^{kp-1}}=\frac{1}{4},\qquad t\geq 0,
\ee
and $kp>2$. Then equation \eqref{e53} implies
\be\label{e54}
r(t)=\left[c_5+c_6t \right]^{-\frac{1}{kp-2}},\quad c_5=r^{2-kp}(0),\quad
c_6=\frac{kp-2}{4c_4},
\ee
where $c_5$ and $c_6$ are positive constants. Their explicit values
are not used below.
This $r(t)$ satisfies conditions \eqref{e42}, and equation \eqref{e53}
can be rewritten as:
\be\label{e55}
k\frac{|\dot{r}(t)|}{r(t)}=\frac{kr^{kp-2}(t)}{4c_4},\quad t\geq 0.
\ee
Recall that $r(t)$ decays monotonically. Therefore, inequality \eqref{e50}
holds if
\be\label{e56}
\frac{kr^{kp-2}(0)}{4c_4}\leq \frac{1}{2}.
\ee
Inequality \eqref{e56} holds if
\be\label{e57}
\frac{kg(0)}{c_2}r^{k(p-1)-2}(0)=\frac{kg(0)}{c_2}r^{b-1}(0)\leq 1,
\ee
because \eqref{e46} implies:
\be\label{e58}
k(p-1)-2=b-1.
\ee
Condition \eqref{e57} holds if $g(0)$ is sufficiently small or $r^{b-1}(0)$
is sufficiently large:
\be\label{e59}g(0)\leq \frac{c_2}{k}r^{b-1}(0).\ee

If $b>1$, then condition \eqref{e59} holds for any fixed $g(0)$ if $r(0)$
is sufficiently large. If $b=1$, then \eqref{e59} holds if $g(0)\leq
\frac{c_2}{k}$. If $b\in (0,1)$ then \eqref{e59} holds either if $g(0)$ is
sufficiently small or $r(0)$ is sufficiently small.

If \eqref{e54} and \eqref{e59} hold, then \eqref{e53} holds.
Consequently,
\eqref{e51} holds if
\be\label{e60}
c_3\frac{[2g(0)]^{p-1}}{r^{b}(0)}\leq \frac{1}{4}. \ee It follows from
\eqref{e59} that \eqref{e60} holds if
\be\label{e61}
c_32^{p-1}\left(\frac{c_2}{k}\right)^{p-1}\frac{1}{r^{-1+p+2b-bp}(0)}\leq
\frac{1}{4}. \ee
One has $p=1+\kappa$, and $\kappa\in (0,1]$. If $b>0$ and $\kappa\in
(0,1]$, then
\be\label{e62} -1+p-pb+2b=\kappa +(1-\kappa)b>0. \ee
Thus,
\eqref{e61} always holds if $r(0)$ is sufficiently large, specifically, if
\be\label{e63} r(0)\geq
[4c_3\left(2c_2 k^{-1}\right)^{p-1}]^{\frac{1}{\kappa  +(1-\kappa)b}}. \ee

We have proved the following theorem.
\begin{thm}\label{thm1} Let the Assumptions {\bf A1, A,2, and A3} hold.
 If $r(t)=|a(t)|$ is
defined in \eqref{e54}, and inequalities \eqref{e59} and \eqref{e63} hold,
then
\be\label{e64} \|z(t)\|<r^k(t)\lambda^{-1},\qquad \lim_{t\to
\infty}\|z(t)\|=0. \ee
Thus, problem \eqref{e7} has a unique global
solution $u(t)$ and
\be\label{e65} \lim_{t\to \infty}\|u(t)-y\|=0,
\ee
where $F(y)=f$.
\end{thm}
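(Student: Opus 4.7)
The plan is to reduce the theorem to an application of \lemref{lem1} applied to the scalar function $g(t) := \|z(t)\|$, where $z(t) := u(t) - w(t)$ and $w(t) := w_{a(t)}$. Local existence of $u(t)$ (and hence of $z$ and $g$) is already furnished by \lemref{lem2}, so the task splits into two: extract a scalar differential inequality of the form \eqref{e8} for $g$, and then exhibit a majorant $\mu(t)$ such that the hypotheses \eqref{e9}--\eqref{e10} of \lemref{lem1} hold for the particular $r(t)$ specified in \eqref{e54} under the standing conditions \eqref{e59} and \eqref{e63}.

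To derive the inequality, I would subtract \eqref{e4} (at $a=a(t)$) from \eqref{e7} to obtain \eqref{e28}, replace the difference $F(u)-F(w)$ by $A(u)z + \eta(t)$ using the identity \eqref{e36} and bound $\|\eta\| = O(g^{1+\kappa})$ via assumption \textbf{A1}, and control $\|\dot w\|$ using \eqref{e17}, \textbf{A2}, and \eqref{en}, which yields \eqref{e35}. Passing from the vector equation to the scalar estimate requires \lemref{lem4} (where Gateaux differentiability of the norm enters), and produces \eqref{e37}: $\dot g \le -g + c_2|\dot r|r^{-b} + c_3 r^{-b} g^{1+\kappa}$, which is of the form \eqref{e8} with $\gamma\equiv 1$, $\alpha(t,g)=c_3 r^{-b}(t) g^{p}$, $\beta(t)=c_2|\dot r(t)|r^{-b}(t)$, $p=1+\kappa$.

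The technical heart of the argument is the construction of $\mu(t)$. I would take $\mu(t) = \lambda r^{-k}(t)$ with $k = (b+1)/(p-1)$ chosen so that the exponent $k(p-1)-b$ appearing in \eqref{e45} equals $1$, and $\lambda = r^k(0)/(2g(0))$ so that \eqref{e10} becomes the trivial \eqref{e43}. Condition \eqref{e9} then collapses to the pointwise inequality \eqref{e45}, a sum of three terms that must remain $\le 1$ for all $t$. The definition of $r(t)$ in \eqref{e54} is engineered precisely so that the middle term $c_4|\dot r|/r^{kp-1}$ is identically $1/4$; inequality \eqref{e50}, needed to bound the third term $k|\dot r|/r$ by $1/2$, then follows from \eqref{e59}; and the first term is forced below $1/4$ by \eqref{e63}. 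The main bookkeeping obstacle is verifying the algebraic identity \eqref{e62}, namely $-1+p+2b-bp = \kappa + (1-\kappa)b > 0$, which is what makes \eqref{e61} satisfiable by choosing $r(0)$ large; this is where the constraint $\kappa\in(0,1]$ together with $b>0$ is genuinely used.

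Once \eqref{e9}--\eqref{e10} are verified, \lemref{lem1} yields $g(t) \le r^k(t)/\lambda$ throughout the interval of existence, which is the first assertion of \eqref{e64}. Since $r(t)\to 0$ by \eqref{e54}, we obtain $g(t)\to 0$, and the boundedness of $w(t)$ granted by \eqref{e5} together with \eqref{e26} produces the a priori estimate \eqref{e23}. This triggers \lemref{lem3} and the standard global-existence criterion, extending the local solution to all $t\ge 0$. Finally, the triangle inequality $\|u(t)-y\| \le g(t) + \|w(t)-y\|$ combined with \eqref{e5} yields \eqref{e65}, completing the proof.
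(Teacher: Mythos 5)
Your proposal is correct and follows essentially the same route as the paper: derive the scalar inequality \eqref{e37} via \eqref{e28}, \eqref{e36}, Lemma \ref{lem4} and the bound \eqref{e35}, then apply Lemma \ref{lem1} with $\mu(t)=\lambda r^{-k}(t)$, $k=(b+1)/(p-1)$, $\lambda=r^k(0)/(2g(0))$, and verify \eqref{e45} by the same $\tfrac14+\tfrac14+\tfrac12$ splitting of the three terms using \eqref{e54}, \eqref{e59} and \eqref{e63}. The concluding passage to global existence via \eqref{e23} and to \eqref{e65} via the triangle inequality and \eqref{e5} also matches the paper.
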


{\bf Proof of Lemma 1}.
Inequality \eqref{e8} can be written as
\be\label{e66}
-\gamma (t)\mu^{-1}(t)+\alpha(t, \mu^{-1}(t))+\beta(t)\le \frac
{d\mu^{-1}(t)}{dt}.
\ee
Let $\phi(t)$ solve the following Cauchy problem:
\be\label{e67}
\dot{\phi}(t)=-\gamma (t)\phi(t)+\alpha(t, \phi(t))+\beta(t),\quad t\ge 0,
\quad \phi(0)=\phi_0.
\ee
The assumption that $\alpha(t,g)$ is locally Lipschitz with
respect to $g$ guarantees local
existence and uniqueness of the solution $\phi (t)$ to problem
\eqref{e67}.
From the known comparison result (see, for instance, \cite{H}, Theorema
III.4.1)
it follows that
\be\label{e68}
\phi(t)\le \mu^{-1}(t) \qquad \forall t\ge 0,
\end{equation}
provided that $\phi(0)\le \mu^{-1}(0)$, where $\phi(t)$ is the unique
solution to problem \eqref{e67}. Let us take $\phi(0)=g(0)$.
Then  $\phi(0)\le \mu^{-1}(0)$ by the assumption in Lemma 1, and
inequality \eqref{e8} implies that
\be\label{e69}
g(t)\le \phi(t)    \qquad t\in [0,T).
\ee
Inequalities  $\phi(0)\le \mu^{-1}(0)$, \eqref{e68},
and \eqref{e69} imply
\be\label{e70}
g(t)\le \phi(t)\le \mu^{-1}(t), \qquad t\in [0,T).
\ee
By the assumption, the function $\mu(t)$ is defined for all $t\ge 0$
and is bounded on any compact subinterval of the set $[0,\infty)$.
Consequently, the functions $\phi(t)$ and $g(t)\ge 0$ are defined
for all $t\ge 0$, and estimate \eqref{e11} is established.
Lemma 1 is proved. \hfill $\Box$

When this paper was under consideration, convergence of the
DSM for general operator equations was established in \cite{R627}.

{\bf Acknowledgement} This paper was completed during a visit to MPI
for mathematics, Leipzig. The author thanks MPI for hospitality.

\newpage


\begin{thebibliography}{00}

\bibitem{DK} Yu. L. Daleckii,  M. G. Krein, {\it Stability of solutions of
differential equations in Banach spaces,} Amer. Math. Soc.,
Providence, RI, 1974.

\bibitem{D} M.Day, {\it Normed linear spaces}, Springer-Verlag, Berlin,
1958.
\bibitem{DS} N.Dunford, J.Schwartz, {\it Linear operators},
Part 1: General theory, Interscience, New York, 1958.

\bibitem{G} M. Gavurin, Nonlinear functional equations and continuous
analysis of iterative methods, Izvestiya VUS'ov, Mathem., 5, (1958), 18-31
(in Russian)

\bibitem{H} P. Hartman,{\it  Ordinary differential equations}, J.Wiley,
New York, 1964.


\bibitem{R574} N.S.Hoang and A.G.Ramm,
 Dynamical Systems Method for solving
nonlinear equations with monotone operators,
Math. of Comput., 79, 269, (2010), 239-258.

\bibitem{R596} N.S.Hoang and A.G.Ramm,
Nonlinear differential inequality,
Mathematical Inequalities and Applications (MIA), 14, N4, (2011), 967-976.

\bibitem{R604} N.S.Hoang and A.G.Ramm, Some nonlinear inequalities and
applications,
Journ. of Abstract Diff. Equations and Applications,
2, N1, (2011), 84-101.

\bibitem{R612} N.S.Hoang and A.G.Ramm,
 \textit{ Dynamical Systems Method and Applications.
Theoretical Developments and Numerical Examples.}
Wiley, Hoboken, 2012.

\bibitem{R499} A.G. Ramm, \textit{ Dynamical systems method for solving
operator equations}, Elsevier, Amsterdam, 2007.

\bibitem{R485} A.G. Ramm, Dynamical systems method (DSM) and
nonlinear problems, in the book: Spectral Theory and Nonlinear
Analysis,
World Scientific Publishers, Singapore, 2005, 201-228. (ed J.
Lopez-Gomez).

\bibitem{R601} A.G. Ramm, On the DSM version of Newton's method,
 Eurasian Math. Journ (EMJ), 2, N3, (2011), 91-99.

\bibitem{R570} A.G. Ramm, Asymptotic stability of solutions to abstract
differential equations, Journ. of Abstract Diff. Equations and
Applications (JADEA), 1, N1, (2010), 27-34.


\bibitem{R605} A.G. Ramm, Stability of solutions to some evolution
problems, Chaotic Modeling and Simulation (CMSIM), 1, (2011), 17-27.



\bibitem{R611} A.G. Ramm, On the DSM Newton-type method, J. Appl.Math.
and Comp.,(JAMC), 38, N1-2, (2012), 523-533.


\bibitem{R606} A.G. Ramm, How large is the class of operator equations
solvable
by a DSM Newton-type method ? Appl. Math. Lett, 24, N6, (2011), 860-865.

\bibitem{Sh} V.Shmulian, On differentiability of the norm in Banach space,
Doklady Acad. Sci. USSR, 27, (1940), 643-648.

\bibitem{T} R. Temam, {\it Infinite-dimensional dynamical systems in
mechanics and physics}, Springer-Verlag, New York, 1997.

\bibitem{R627} A.G. Ramm, DSM for general nonlinear equations, Appl.Math. 
Lett., (2012) http://dx.doi.org/10.1016/j.aml.2012.06.006




\end{thebibliography}
\end{document}